\def\be{\begin{equation}}
\def\ee{\end{equation}}
\def\bea{\begin{eqnarray}}
\def\eea{\end{eqnarray}}
\newtheorem{theorem}{Theorem}[section]
\newtheorem{lemma}[theorem]{Lemma}
\newtheorem{definition}{Def}
\begin{document}

\title{Order-preserving condition for coherence measures of projective measurements with One Example } 
\author{Hai Wang}
\affiliation{School of Mathematics and Statistics, Nanjing University of Science and Technology, Nanjing 210094, Jiangsu, China}
\email{quanthw@njust.edu.cn}

\begin{abstract}
   Superposition is an essential feature of quantum mechanics. From the Schr$\ddot{o}$dinger's cat to quantum algorithms such as Deutsch-Jorsza algorithm, quantum superposition plays an important role. It is one fundamental and crucial question how to quantify superposition. Until now, the framework of coherence has been well established as one typical instance of quantum resource theories. And the concept of coherence has been generalized into linearly independent basis, projective measurements and POVMs. In this work, we will focus on coherence measures for projective measurements or orthogonal subspaces. One new condition, order-preserving condition, is proposed for such measures. This condition is rooted in the mathematical structure of Hilbert spaces' orthogonal decomposition. And by generalizing the $\frac{1}{2}$-affinity of coherence into subspace cases, we verify that this generalized coherence measure satisfies the order-preserving condition. And it also satisfies other reasonable conditions to be a good coherence measure. As the partial order relationship exists for not only projective measurements, but also POVMs, it's natural to study the order-preserving condition in POVM cases, which will be the last part of this work.
  
\end{abstract}

\maketitle

Quantum superposition is one fundamental principle in quantum mechanics. From the famous Schr$\ddot{o}$dinger's cat \cite{cat} to double-slit interferences\cite{carnal_young,quach_way} and  quantum applications\cite{zernike_concept,zhang_coherent,DIVINCENZO_1999,yu_total,chitambar_relating,Li_coherence,Pati_17}, quantum superposition plays an crucial role. It is important to quantify superposition properly in either superpositions of states or superpositions of evolution paths. As one typical quantum resource, concepts such as free states and free operations have become standard considerations in quantum resource theories\cite{Gour_rev,wu_imaginary,liu_channel,takagi_general,winter_coherence,Kuroiwa_2020}. And for resource measures, faithfullness, monotonicity, strong monotonicity and convexity have been utilized as one unified framework for quantum resource measures\cite{Gour_rev,Baumgratz_cohere,Vicente_2014,Gallego_15,Streltsov_2018,wu_imaginary}. Until now,  quantum coherence is indispensable for many quantum technologies, such as quantum communications\cite{Duan_00,Halder_2008,Juan_14,Bibak_24}, quantum computations\cite{DIVINCENZO_1999,Jeong_02,Ralph_03,Sha_19}. Although originating form superpositions of orthogonal states, by far, coherence has been generalized into linearly-independent states and POVMs\cite{Aberg_subspace,superposition_coherence,Bischof_povm,Bischof_pra}.

In terms of quantum resource theories, conditions for reasonable coherence measures are firstly proposed in \cite{Baumgratz_cohere}, which are

$(C1)$ Faithfulness. $C(\rho)\geq 0$ and $C(\rho)=0$ if and only if $\rho\in \mathcal{I}$,

$(C2)$ Monotonicity. $C(\Phi(\rho))\leq C(\rho)$ for arbitrary free operation $\Phi$,

$(C3)$ Strong monotonicity. $\sum_i p_i C(\sigma_i)\leq C(\rho)$ with $p_i =Tr{(K_j \rho K_j^{\dagger})}$ and $\sigma_i =p_i^{-1} K_i \rho K_i^{\dagger}$,

$(C4)$ Convexity. $\sum_i p_i C(\rho_i)\geq C(\sum_i p_i \rho_i)$ for arbitrary $\rho_i$ and $p_i\geq 0$ satisfying $\sum_i p_i=1$.

These four conditions have become a paragon for measures about quantum resources, such as entanglement, steering and recent imaginarity and so on\cite{Gour_rev,Baumgratz_cohere,Vicente_2014,Gallego_15,Streltsov_2018,wu_imaginary}. Furthermore, in \cite{Yu_cohere}, they propose an alternative framework for coherence measures, which are proved to be equivalent to criterion in \cite{Baumgratz_cohere}. Concretely, they introduce one new condition, called additivity, to replace conditions $(C3)$ and $(C4)$ in \cite{Baumgratz_cohere}, which is stated as for 
block diagonal states $\rho=p_1 \rho_1\oplus p_2 \rho_2$ in the coherent basis,
$$(C3')\ C(p_1 \rho_1\oplus p_2 \rho_2)=p_1 C(\rho_1)+p_2C(\rho_2).$$ This alternative framework significantly simplifies the justification of coherence measures. 

Although above conditions are about the validity of coherence measures in one fixed orthogonal basis, these ideas boost coherence measures about orthogonal subspaces\cite{Aberg_subspace} and POVMs\cite{Bischof_povm,Bischof_pra}. However, compared with orthogonal basis, there is one remarkable difference for projective subpaces and POVMs. That is, for these two, there exists one partial order relationship, which cannot exist in the traditional coherence setting. And our order-preserving condition is one natural criteria rooted in this difference. Our work will focus on this difference mainly for subspace coherence measures. This work will be devided into three parts. Firstly, we will introduce the order-preserving condition and its motivation in detail. Then, we extend the previous $\frac{1}{2}$-affinity of coherence\cite{Xiong_Pra} into subspace cases, which will be used as an example to show that there are subspace coherence measures satisfying both previous conditions and the order-preserving condition. The last part will be arranged to discuss our order-preserving condition in POVM cases.

\noindent {\it \textbf{Order-Preserving Condition.}---}
In terms of measurements, coherence about orthogonal basis can be regarded as coherence corresponding to von Neumann measurements. For different orthogonal bases, generally we think they are equivalent among each other. Mathematically, this means that there is no partial order relationship in the set of orthogonal bases. However, when we go from von Neumann measurements to projective measurements, things change dramatically. For one Hilbert space $\mathcal{H}$, one projective measurement $\mathbf{P}$ uniquely corresponds to $\mathrm{P}=\{P_m\}$, one orthogonal decomposition of $\mathcal{H}$. On the other hand, for two measurements $\mathbf{P}$ and $\mathbf{Q}$, it is possible that $\mathbf{Q}$ is some coarse graining of $\mathbf{P}$. Namely, every result $q$ case in $\mathbf{Q}$ is one union of some different $p$ cases in $\mathbf{P}$. Combining projective measurements with measurements' coarse graining, we can introduce one partial order relationship for the set of projective measurements. Consider two projective measurements $\mathbf{P}$ and $\mathbf{Q}$, we say $\mathbf{Q}\succeq \mathbf{P}$ if and only if their project projectors $\mathrm{P}=\{P_m\}_{m=0}^{M-1}$ and $\mathrm{Q}=\{Q_n\}_{n=0}^{N-1}$ satisfy that for every $0\leq n\leq N-1$, there is one $0\leq m\leq M-1$ such that for every $0\leq m\leq M-1$, there is one subset $\Lambda_m\subseteq \{0,\ldots,N-1\}$ such that $P_m=\sum_{n\in \Lambda_m}Q_n$. For convenience, projective measurent $\mathbf{P}$ and its projectors $\mathrm{P}=\{P_m\}_{m=0}^{M-1}$ will be regarded as the same.

Intuitively $\mathrm{Q}\succeq \mathrm{P}$ means that $\mathrm{Q}=\{Q_n\}_{n=0}^{N-1}$ is one refinement of $\mathrm{P}=\{P_m\}_{m=0}^{M-1}$. In terms of measurements, for one fixed setting, it is natural to require that if we measure more carefully, then we will not get less. For subspace coherence, given one quantum state $\rho$, suppose that we have one coherence measure $C(\cdot)$. Considering $\mathrm{P}=\{P_m\}_{m=0}^{M-1}$ and $\mathrm{Q}=\{Q_n\}_{n=0}^{N-1}$ with $\mathbf{Q}\succeq \mathbf{P}$, inspired by this fact, it is natural to require that
\begin{equation}
    C_{\mathrm{Q}}(\rho)\geq C_{\mathrm{P}}(\rho).
\end{equation}
We call this property the order-preserving condition.
\begin{definition}[Order-Preserving Condition]
    For two orthogonal decompositions $\mathrm{P}=\{P_m\}_{m=0}^{M-1}$ and $\mathrm{Q}=\{Q_n\}_{n=0}^{N-1}$ of one Hilbert space $\mathcal{H}$, if $\mathrm{Q}\succeq \mathrm{P}$, then $C_{\mathrm{Q}}(\rho)\geq C_{\mathrm{P}}(\rho)$.
\end{definition}

In the next section, we will give one concrete subspace coherence measure, which satisfies both previous conditions and our order-preserving condition.
But before that, we should specify free states and free operations in terms of subspace coherence for completeness.

\begin{definition}
    Given one orthogonal decomposition $\mathrm{P}=\{P_i\}_{i=0}^{M-1}$ of $H$, if $\rho$ belongs to the following set
\begin{equation*}
    \mathcal{I}=\{\rho | \rho=\sum_m P_m \rho P_m\},
    \label{incoherent_subspace}
\end{equation*}then $\rho$ is one free state in terms of $\mathrm{P}$.
\end{definition}

\begin{definition}
    Given one orthogonal decomposition $\mathrm{P}=\{P_i\}_{i=0}^{M-1}$ of $H$, one quantum channel $\Phi$ is free, if it admits one Kraus operator representation $\Phi(\rho)=\sum_i K_i \rho K_i^{\dagger}$ such that for every $i$,
\begin{equation*}
   \frac{K_i \sigma K_i^{\dagger}}{Tr{(K_i \sigma K_i^{\dagger})}}\in \mathcal{I},\forall \sigma\in \mathcal{I}. 
\end{equation*} 
\end{definition}

With these preparations, following the framework in \cite{Yu_cohere}, we say $C(\cdot)$ is one reasonable subspace coherence measure, if it satisfies:

$(C1')$ Faithfulness. $C(\rho)\geq 0$ and $C(\rho)=0$ if and only if $\rho\in \mathcal{I}$.

$(C2')$ Monotonicity. $C(\Phi(\rho))\leq C(\rho)$ for arbitrary free operation $\Phi$.

$(C3')$ Additivity for block-diagonal states. $C(p_1 \rho_1\oplus p_2 \rho_2)=p_1 C(\rho_1)+p_2C(\rho_2).$ 

$(C4')$ Order-preserving. $C_{\mathrm{Q}}(\rho)\geq C_{\mathrm{P}}(\rho)$ for orthogonal decompositions $\mathrm{Q}\succeq \mathrm{P}$.

Conditions $(C1')$-$(C3')$ are direct generalizations of conditions in \cite{Yu_cohere}. And one can also generalize conditions in \cite{Baumgratz_cohere} to subspace coherence cases as done in \cite{Aberg_subspace,Bischof_povm,Bischof_pra}. It is also easy to check that these two generalizations are equivalent using methods in \cite{Yu_cohere}. Although \cite{Aberg_subspace,Bischof_povm,Bischof_pra} generalize coherence into projector and POVM cases,  they didn't notice the mathematical structure of the set of orthogonal decompositions. In terms of subspace coherence measures, our order-preserving condition is one new critia. Now, it's time to give one concrete instance to show that there are subspace coherence measures satisfying $(C1')$-$(C4')$.

\noindent{{\it \textbf{$1/2$-affinity of subspace coherence}---}}
Suppose $\mathcal{H}$ is one $d$-dimensional Hilbert space. Given one orthogonormal basis $\{\ket{\mu_i}\}_{i=0}^{d-1}$, the original $1/2$-affinity of coherence measures the coherence of one state $\rho$ as 
\begin{equation}
     C(\rho)=1-\sum_i \bra{\mu_i}\sqrt{\rho}\ket{\mu_i}^{2},
     \label{original_version}
\end{equation}which was firstly proposed by Xiong and others in \cite{Xiong_Pra}. Using this coherence measure, they reveal nice connections between quantum coherence and path distinguishability\cite{Spehner2014,Spehner2017}.

Now back to our projector setting, consider one orthogonal decomposition $\mathrm{P}=\{P_i\}_{i=0}^{M-1}$ with $M\leq d$. 
\begin{definition}
    Given one orthogonal decomposition $\mathrm{P}=\{P_i\}_{i=0}^{M-1}$ of $\mathcal{H}$, for one state $\rho$, its coherence about $\mathrm{P}$ is 
    \begin{equation}
        C(\rho,\mathrm{P})=1-\sum_{m=0}^{M-1}Tr{[(P_m\sqrt{\rho}P_m)^2]}.
        \label{coherence_sub}
    \end{equation}
\end{definition}
In the following, when the corresponding orthogonal decomposition is unambiguous, we will use $C(\rho)$ directly for convenience.

Before verifying the legitimacy of this measures referring to $(C1')$-$(C4')$, let us see one familiar case. When $\mathrm{P}=\{P_i\}_{i=0}^{M-1}$ corresponds to some orthonormal basis, that is for all $i$, $P_i=\ket{\mu_i}\bra{\mu_i}$, then our $C(\rho)$ will degenerate to the $1/2$-affinity of coherence defined in \cite{Xiong_Pra}, which shows that the quantity in Eq.\eqref{coherence_sub} is one generalization of the measure in Eq.\eqref{original_version} in this sense.
Now, we will verify that $C(\cdot)$ is one reasonable measure for subspace coherence satisfying conditions $(C1')$-$(C4')$.

Firstly, about $C(\cdot)$, we show that it can be re-expressed as the distance between the state $\rho$ and the free state set $\mathcal{I}$.
\begin{lemma}
For states $\rho$, 
   \begin{equation}
     C(\rho)=\min\{d_a^{\frac{1}{2}}(\rho,\sigma)|\sigma=\sum_{m=0}^{M-1}P_m\sigma P_m\in\mathcal{I}\},
     \label{equiv_min}
   \end{equation}where $d_a^{\frac{1}{2}}(\rho,\sigma)=1-[Tr{(\sqrt{\rho}\sqrt{\sigma})}]^2$ is the $\frac{1}{2}$-affinity of distance \cite{Luo_04}. And the minimum can be reached if and only if  
   \begin{equation}
    \sigma_{\rho}=\oplus_m \frac{Tr{[(P_m \sqrt{\rho}P_m)^2]}}{\sum_n Tr{[(P_n \sqrt{\rho}P_n)^2]}}\frac{(P_m \sqrt{\rho}P_m)^2}{Tr{[(P_m \sqrt{\rho}P_m)^2]}}.
    \label{min_condition}
\end{equation}
\end{lemma}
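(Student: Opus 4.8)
The plan is to show that the function $\sigma \mapsto d_a^{1/2}(\rho,\sigma) = 1 - [\mathrm{Tr}(\sqrt{\rho}\sqrt{\sigma})]^2$, minimized over the free set $\mathcal{I} = \{\sigma = \sum_m P_m \sigma P_m\}$, equals $C(\rho) = 1 - \sum_m \mathrm{Tr}[(P_m\sqrt{\rho}P_m)^2]$, with the minimizer given by $\sigma_\rho$ in Eq.~\eqref{min_condition}. Since minimizing $1 - [\mathrm{Tr}(\sqrt{\rho}\sqrt{\sigma})]^2$ over $\sigma \in \mathcal{I}$ is equivalent to maximizing $\mathrm{Tr}(\sqrt{\rho}\sqrt{\sigma})$ (the quantity is nonnegative by Cauchy–Schwarz), the core task is to prove
\[
\max_{\sigma \in \mathcal{I}} \mathrm{Tr}(\sqrt{\rho}\sqrt{\sigma})^2 = \sum_{m=0}^{M-1} \mathrm{Tr}[(P_m\sqrt{\rho}P_m)^2].
\]
First I would parametrize a general $\sigma \in \mathcal{I}$ as a block-diagonal state $\sigma = \bigoplus_m q_m \sigma_m$, where $q_m \geq 0$, $\sum_m q_m = 1$, and each $\sigma_m$ is a density operator supported on $\mathrm{ran}(P_m)$. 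Then $\sqrt{\sigma} = \bigoplus_m \sqrt{q_m}\sqrt{\sigma_m}$, so $\mathrm{Tr}(\sqrt{\rho}\sqrt{\sigma}) = \sum_m \sqrt{q_m}\,\mathrm{Tr}(P_m \sqrt{\rho} P_m \sqrt{\sigma_m})$, using $P_m \sqrt{\sigma_m} = \sqrt{\sigma_m} = \sqrt{\sigma_m} P_m$.

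Next I would carry out the optimization in two nested stages. For fixed weights $\{q_m\}$, maximize each term $\mathrm{Tr}(P_m\sqrt{\rho}P_m\,\sqrt{\sigma_m})$ over density operators $\sigma_m$ on $\mathrm{ran}(P_m)$: since $\sqrt{\sigma_m}$ has unit Hilbert–Schmidt-type norm in the sense $\mathrm{Tr}(\sigma_m)=1$, this is again a Cauchy–Schwarz / variational problem whose maximum is $\sqrt{\mathrm{Tr}[(P_m\sqrt{\rho}P_m)^2]}$, attained when $\sqrt{\sigma_m} \propto P_m\sqrt{\rho}P_m$, i.e. $\sigma_m \propto (P_m\sqrt{\rho}P_m)^2$. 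Substituting back, I must maximize $\sum_m \sqrt{q_m}\,\sqrt{a_m}$ subject to $\sum_m q_m = 1$, where $a_m := \mathrm{Tr}[(P_m\sqrt{\rho}P_m)^2]$; by Cauchy–Schwarz this maximum is $\sqrt{\sum_m a_m}$, attained at $q_m = a_m / \sum_n a_n$. Squaring gives exactly $\sum_m a_m$, and combining the two optimal choices yields precisely the $\sigma_\rho$ of Eq.~\eqref{min_condition}. The "only if" direction for the minimizer follows from the equality conditions in the two Cauchy–Schwarz steps (each forces proportionality), which pin down $\sigma_\rho$ uniquely on the support of $\rho$ within each block.

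The main obstacle I anticipate is handling the rank-deficient / kernel issues carefully: $\rho$ need not be full rank, $P_m\sqrt{\rho}P_m$ may be singular or even zero on some block, and one must be precise about the support of the optimal $\sigma_m$ (it should lie in the support of $P_m\sqrt{\rho}P_m$, but $\sigma$ is still allowed to have arbitrary support elsewhere without changing the overlap, which is why the minimizer statement should be read modulo the kernel of $\rho$, or one simply states it as "a minimizer is given by" Eq.~\eqref{min_condition}). A secondary technical point is justifying the interchange of the block-wise inner optimization with the outer optimization over $\{q_m\}$ — this is legitimate because the objective is separable once the block structure is fixed, but it should be stated cleanly. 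Modulo these care points, every step is an application of Cauchy–Schwarz together with the block-diagonal structure forced by $\sigma \in \mathcal{I}$, so the argument is short.
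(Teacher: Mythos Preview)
Your proposal is correct and follows essentially the same route as the paper's own proof: parametrize $\sigma\in\mathcal{I}$ as $\bigoplus_m q_m\sigma_m$, apply the Hilbert--Schmidt Cauchy--Schwarz inequality blockwise to bound $\mathrm{Tr}(P_m\sqrt{\rho}P_m\,\sqrt{\sigma_m})$, and then apply Cauchy--Schwarz (the paper calls it H\"older) to the sum $\sum_m\sqrt{q_m}\sqrt{a_m}$, reading off the optimizer from the two equality conditions. Your additional remarks on rank-deficient blocks and on the separability of the nested optimization are sound refinements that the paper does not spell out.
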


\begin{proof}
    Because $d_a^{\frac{1}{2}}(\rho,\sigma)=1-[Tr{(\sqrt{\rho}\sqrt{\sigma})}]^2$, so the rightside of Eq.\eqref{equiv_min} is equivalent to 
    \begin{equation}
        1-\max_{\sigma\in \mathcal{I}} [Tr{(\sqrt{\rho}\sqrt{\sigma})}]^2,
    \end{equation}where $\mathcal{I}=\{\sigma | \sigma=\sum_{m=0}^{M-1}P_m\sigma P_m\}$.

Note that for $\sigma\in \mathcal{I}$, $\sigma$ must be expressed as 
\begin{equation}
    \sigma=\oplus_m p_m\sigma_m,
    \label{decomposition}
\end{equation}where $\overrightarrow{p}$ is some probability distribution and $\sigma_m$ is some state on the subspace $P_m \mathcal{H}P_m$ for $0\leq m\leq M-1$. Take Eq.\eqref{decomposition} into $Tr{(\sqrt{\rho}\sqrt{\sigma})}$, we have
\begin{equation}
    Tr{(\sqrt{\rho}\sqrt{\sigma})}=\sum_m \sqrt{p_m}Tr{[(P_m \sqrt{\rho}P_m)^{\dagger}\sqrt{\sigma_m}]}.
\end{equation}As $|Tr{(A^{\dagger}B)}|\leq [Tr{(A^{\dagger}A)}]^{\frac{1}{2}}[Tr{(B^{\dagger}B)}]^{\frac{1}{2}}$ with equality established if and only if $B=\alpha A$. For $0\leq m\leq M-1$, let $A$ be $P_m \sqrt{\rho}P_m$ and $B$ be $\sqrt{\sigma_m}$, as $Tr{(\sqrt{\sigma_m}\sqrt{\sigma_m})}=1$, so we know for $0\leq m\leq M-1$,
\begin{equation}
    Tr{[(P_m \sqrt{\rho}P_m)^{\dagger}\sqrt{\sigma_m}]}\leq [Tr{((P_m \sqrt{\rho}P_m)^2)}]^{\frac{1}{2}},
\end{equation} with equality held if and only if $\sqrt{\sigma_m}=\frac{P_m \sqrt{\rho}P_m}{[Tr{((P_m \sqrt{\rho}P_m)^2)}]^{\frac{1}{2}}}$. Now we know that for $ \sigma=\oplus_m p_m\sigma_m$,
\begin{equation}
    Tr{(\sqrt{\rho}\sqrt{\sigma})}\leq \sum_m \sqrt{p_m}[Tr{((P_m \sqrt{\rho}P_m)^2)}]^{\frac{1}{2}},
\end{equation}for the rightside of the above inequality, using the H$\ddot{o}$lder inequality, we have
\begin{equation}
    \sum_m \sqrt{p_m}[Tr{((P_m \sqrt{\rho}P_m)^2)}]^{\frac{1}{2}}\leq [\sum_j Tr{((P_j \sqrt{\rho}P_j)^2)}]^{\frac{1}{2}},
\end{equation}with equality held if and only if $p_m=\frac{Tr{[(P_m \sqrt{\rho}P_m)^2]}}{\sum_n Tr{[(P_n \sqrt{\rho}P_n)^2]}}$, for $0\leq m\leq M-1$.

Above all, we know that for $\sigma\in \mathcal{I}$,
\begin{equation}
    Tr{(\sqrt{\rho}\sqrt{\sigma})}\leq [\sum_m Tr{[(P_m \sqrt{\rho}P_m)^2]}]^{\frac{1}{2}},
\end{equation}which turns into one equation if and only if 
\begin{equation*}
    \sigma=\oplus_m \frac{Tr{[(P_m \sqrt{\rho}P_m)^2]}}{\sum_n Tr{[(P_n \sqrt{\rho}P_n)^2]}}\frac{(P_m \sqrt{\rho}P_m)^2}{Tr{[(P_m \sqrt{\rho}P_m)^2]}}.
\end{equation*} That is, $\max_{\sigma\in \mathcal{I}} [Tr{(\sqrt{\rho}\sqrt{\sigma})}]^2=\sum_m Tr{[(P_m \sqrt{\rho}P_m)^2]}$. On the other hand, by definition, we have $C(\rho)=1-\sum_{m=0}^{M-1}Tr{[(P_m\sqrt{\rho}P_m)^2]}$. So
\begin{equation*}
    C(\rho)=1-\max_{\sigma\in \mathcal{I}} [Tr{(\sqrt{\rho}\sqrt{\sigma})}]^2=\min_{\sigma\in \mathcal{I}} d_a^{\frac{1}{2}}(\rho,\sigma).
\end{equation*}
\end{proof}
With this Lemma, it's obvious that $C(\rho)=0$ if and only if $\rho\in \mathcal{I}$, which shows our $C(\cdot)$ satisfies $(C1')$ condition. What's more, as $d_a ^{\frac{1}{2}}(\cdot ,\cdot)$ is contractive under quantum operations, the above Lemma guarantees that $C(\Phi(\rho))\leq C(\rho)$ for any incoherent quantum operation $\Phi$ and quantum state $\rho$. Thus,our $C(\cdot)$ satisfies $(C2')$ condition as well. In particular, for arbitrary $U$ that can be decomposed as $ U=\bigoplus_m U_m$, where each $U_m$ is one arbitrary unitary operator on the subspace corresponding to $P_m$, we have
$C(U\rho U^{\dagger})=C(\rho)$, which is also one requirement in \cite{Aberg_subspace}.

Next, we will show that our $C(\Phi(\rho))$ satisfies $C(3')$, the additivity condition.
\begin{lemma}
$C_{\frac{1}{2}}(\rho)$ satisfies the additivity condition in the following sense, for any decomposition $\mathcal{H}_1=\bigoplus_m P_m$, if $\rho=p\rho_1\oplus (1-p)\rho_2$ is block diagonal in respect to $\bigoplus_{m=0}^{M-1} P_m$, then 
    \begin{equation}
       C(\rho)=C(p\rho_1\oplus (1-p)\rho_2)=p C(\rho_1)+(1-p)C(\rho_2).
    \end{equation} 
\end{lemma}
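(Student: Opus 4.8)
The plan is to prove the additivity directly from the defining formula \eqref{coherence_sub}, without invoking the distance characterisation of the previous Lemma. First I would unpack the hypothesis: writing $\rho = p\rho_1 \oplus (1-p)\rho_2$ block-diagonally means that the index set $\{0,\dots,M-1\}$ partitions into two disjoint subsets $S_1, S_2$, that $\mathcal{H}$ decomposes as $\mathcal{H} = \mathcal{H}_1 \oplus \mathcal{H}_2$ with $\mathcal{H}_i = \bigoplus_{m\in S_i} P_m\mathcal{H}$, and that $\rho_i$ is a state supported on $\mathcal{H}_i$ whose coherence $C(\rho_i)$ is computed with respect to the sub-collection $\{P_m\}_{m\in S_i}$. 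The point is that the two-block grading $\mathcal{H}_1 \oplus \mathcal{H}_2$ is coarser than the full grading $\bigoplus_m P_m\mathcal{H}$, so $\rho$ need not be free and $C(\rho_1), C(\rho_2)$ can be nonzero.

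The key observation is that the operator square root respects this block structure: since $\rho_1$ and $\rho_2$ are supported on orthogonal subspaces, functional calculus gives $\sqrt{\rho} = \sqrt{p}\,\sqrt{\rho_1} \oplus \sqrt{1-p}\,\sqrt{\rho_2}$. Then for each $m \in S_1$ the projector $P_m$ projects inside $\mathcal{H}_1$ and annihilates $\mathcal{H}_2$, so $P_m\sqrt{\rho}P_m = \sqrt{p}\,P_m\sqrt{\rho_1}P_m$ and hence $Tr[(P_m\sqrt{\rho}P_m)^2] = p\,Tr[(P_m\sqrt{\rho_1}P_m)^2]$; symmetrically $Tr[(P_m\sqrt{\rho}P_m)^2] = (1-p)\,Tr[(P_m\sqrt{\rho_2}P_m)^2]$ for $m \in S_2$. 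Summing over all $m$, splitting the sum across $S_1$ and $S_2$, and using $\sum_{m\in S_i} Tr[(P_m\sqrt{\rho_i}P_m)^2] = 1 - C(\rho_i)$ from \eqref{coherence_sub} yields $\sum_m Tr[(P_m\sqrt{\rho}P_m)^2] = p\,(1-C(\rho_1)) + (1-p)\,(1-C(\rho_2))$. Substituting this back into \eqref{coherence_sub} for $C(\rho)$ gives $C(\rho) = p\,C(\rho_1) + (1-p)\,C(\rho_2)$, as claimed.

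I expect the only genuinely delicate point to be the block-diagonality of $\sqrt{\rho}$, i.e. justifying that taking the square root commutes with the coarse grading $\mathcal{H}_1 \oplus \mathcal{H}_2$; this is immediate once one notes that $\rho$ itself is block-diagonal with respect to that grading (each summand is supported on its own subspace), so its spectral decomposition splits accordingly. The remaining manipulations — pushing $P_m$ through the direct sum, pulling out the scalars $p$ and $1-p$, and re-indexing the sum — are pure bookkeeping and present no real obstacle. An alternative route would be to feed the block decomposition into the minimisation in the previous Lemma and use that the optimal $\sigma_{\rho}$ in \eqref{min_condition} is itself block-diagonal, but the direct computation above is shorter and self-contained.
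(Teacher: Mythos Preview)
Your proposal is correct and follows essentially the same route as the paper: both arguments partition the index set into two pieces carrying $\rho_1$ and $\rho_2$, use the block-diagonality of $\sqrt{\rho}=\sqrt{p}\,\sqrt{\rho_1}\oplus\sqrt{1-p}\,\sqrt{\rho_2}$, and then split the defining sum in \eqref{coherence_sub} accordingly. If anything, you are more explicit than the paper about why the square root respects the block structure, which the paper invokes without comment.
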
    
\begin{proof}
As $\rho=p\rho_1\oplus (1-p)\rho_2$ is block diagonal in respect to $\bigoplus_m P_m$, so there must two disjoint set $\Lambda_1$ and $\Lambda_2$ of $\{0,\ldots,M-1\}$ such that 
\begin{equation*}
    \rho_1\in \bigoplus_{m\in\Lambda_1}P_m, \ \rho_2\in \bigoplus_{m\in\Lambda_2}P_m.
\end{equation*}Then the following will be true:   

\begin{align*}                                                 
        &C(\rho)=C(p\rho_1\oplus (1-p)\rho_2)\\
        &=1-\sum_{m\in \Lambda_1} Tr\{[(P_m\oplus0)\sqrt{p\rho_1\oplus (1-p)\rho_2}(P_m\oplus0)]^2\}-\\
        &\sum_{n\in\Lambda_2} Tr\{[(0\oplus P_n)\sqrt{p\rho_1\oplus (1-p)\rho_2}(0\oplus P_n)]^2\}\\
        &=1-p\sum_{m\in\Lambda_1}Tr[(P_m\sqrt{\rho_1}P_m)^2]-(1-p)\sum_{n\in\Lambda_2}Tr[(P_n\sqrt{\rho_2}P_n)^2]\\
        &=pC(\rho_1)+(1-p) C(\rho_2).       
\end{align*}  
Thus, condition $(C4')$ is verified.
\end{proof}

Now, it's high time to show that our $C(\cdot)$ satisfies the order preserving condition
\begin{lemma} 
$C(\rho)$ satisfies the order preserving condition. That is, for two orthogonal decompositions $\mathrm{P}=\{P_m\}_{m=0}^{M-1}$ and $\mathrm{Q}=\{Q_n\}_{n=0}^{N-1}$ of $\mathcal{H}$, if $\mathrm{Q}\succeq \mathrm{P}$, then $C(\rho,\mathrm{Q})\geq C(\rho,\mathrm{P})$.
\end{lemma}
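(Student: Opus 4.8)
The plan is to reduce the claim to a single block of $\mathrm{P}$ and then to an elementary block-matrix trace identity. Since $C(\rho,\mathrm{P})=1-\sum_{m=0}^{M-1}Tr{[(P_m\sqrt{\rho}P_m)^2]}$, the desired inequality $C(\rho,\mathrm{Q})\ge C(\rho,\mathrm{P})$ is equivalent to
\begin{equation*}
\sum_{n=0}^{N-1}Tr{[(Q_n\sqrt{\rho}Q_n)^2]}\;\le\;\sum_{m=0}^{M-1}Tr{[(P_m\sqrt{\rho}P_m)^2]}.
\end{equation*}
Because $\mathrm{Q}\succeq\mathrm{P}$, the subsets $\Lambda_0,\dots,\Lambda_{M-1}$ are pairwise disjoint and cover $\{0,\dots,N-1\}$, with $P_m=\sum_{n\in\Lambda_m}Q_n$. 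Hence it suffices to prove, for each fixed $m$, the per-block bound
\begin{equation*}
\sum_{n\in\Lambda_m}Tr{[(Q_n\sqrt{\rho}Q_n)^2]}\;\le\;Tr{[(P_m\sqrt{\rho}P_m)^2]},
\end{equation*}
and then sum these $M$ inequalities, using that the $\Lambda_m$ partition $\{0,\dots,N-1\}$ to reassemble the left-hand side.

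For the per-block statement I would write $X=\sqrt{\rho}$ and expand $P_mXP_m=\sum_{n,n'\in\Lambda_m}Q_nXQ_{n'}$. Setting $A_{nn'}:=Q_nXQ_{n'}$, the Hermiticity of $X$ gives $A_{nn'}^{\dagger}=Q_{n'}XQ_n=A_{n'n}$. Using the orthogonality relations $Q_nQ_{n'}=\delta_{nn'}Q_n$ together with cyclicity of the trace, all cross terms collapse and
\begin{equation*}
Tr{[(P_mXP_m)^2]}=\sum_{n,n'\in\Lambda_m}Tr{(A_{nn'}A_{n'n})}=\sum_{n,n'\in\Lambda_m}Tr{(A_{nn'}A_{nn'}^{\dagger})}\;\ge\;\sum_{n\in\Lambda_m}Tr{(A_{nn}A_{nn}^{\dagger})},
\end{equation*}
the inequality being nothing but dropping the manifestly nonnegative off-diagonal Hilbert--Schmidt norms $Tr{(A_{nn'}A_{nn'}^{\dagger})}$ for $n\ne n'$. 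Finally $A_{nn}=Q_nXQ_n$ is Hermitian, so $Tr{(A_{nn}A_{nn}^{\dagger})}=Tr{[(Q_nXQ_n)^2]}$, which is exactly the per-block bound, and summing over $m$ finishes the lemma.

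There is no deep obstacle here: the core is a single Pythagorean-type identity for the squared Hilbert--Schmidt norm of a Hermitian block matrix. The points that need a little care are (i) invoking the definition of $\succeq$ to guarantee the $\Lambda_m$ genuinely form a partition of $\{0,\dots,N-1\}$, which is what legitimizes both the reduction to one block and the final summation; and (ii) the Hermiticity bookkeeping $A_{nn'}^{\dagger}=A_{n'n}$ and $Tr{(A_{nn}A_{nn}^{\dagger})}=Tr{[(Q_nXQ_n)^2]}$, which is precisely what makes the cross terms nonnegative with the correct sign. It is worth remarking that equality holds exactly when $Q_nXQ_{n'}=0$ for all $n\ne n'$ inside every coarse block, i.e.\ when $\sqrt{\rho}$---equivalently $\rho$---is already block diagonal with respect to the finer decomposition within each $P_m$; this matches the intuition that a finer measurement never reveals less coherence, and reveals the same amount only when there is nothing extra to resolve.
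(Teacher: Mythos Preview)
Your proof is correct and follows essentially the same route as the paper: reduce $C(\rho,\mathrm{Q})\ge C(\rho,\mathrm{P})$ to a per-block inequality $Tr[(P_m\sqrt{\rho}P_m)^2]\ge\sum_{n\in\Lambda_m}Tr[(Q_n\sqrt{\rho}Q_n)^2]$, expand $P_m=\sum_{n\in\Lambda_m}Q_n$, and observe that the off-diagonal cross terms $Tr[(Q_n\sqrt{\rho}Q_{n'})(Q_n\sqrt{\rho}Q_{n'})^{\dagger}]$ are nonnegative Hilbert--Schmidt norms. Your treatment is in fact slightly more explicit than the paper's (you spell out the partition property of the $\Lambda_m$, the Hermiticity identity $A_{nn'}^{\dagger}=A_{n'n}$, and the equality case), but the argument is the same.
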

\begin{proof}
    Firstly, by definition, for these two orthogonal decompositions $\mathrm{P}=\{P_m\}_{m=0}^{M-1}$ and $\mathrm{Q}=\{Q_n\}_{n=0}^{N-1}$, we have  $C_{\mathrm{P}}(\rho)=1-\sum_{m=0}^{M-1}Tr{[(P_m\sqrt{\rho}P_m)^2]}$ and $C_{\mathrm{Q}}(\rho)=1-\sum_{n=0}^{N-1}Tr{[(Q_n\sqrt{\rho}Q_n)^2]}$.
    
    Because $\mathrm{Q}\succeq \mathrm{P}$, without losing generality, we assume that $P_0=Q_0+Q_1$. Then for $Tr{[(P_0\sqrt{\rho}P_0)^2]}$, it equals to 
    \begin{equation*}
        \sum_{i=0}^1 Tr{[(Q_i\sqrt{\rho}Q_i)^2]}+\sum_{0\leq i\neq j\leq 1}Tr{[(Q_i \sqrt{\rho}Q_j)^{\dagger}(Q_i \sqrt{\rho}Q_j)]},
    \end{equation*}which results in
    \begin{equation}
        Tr{[(P_0\sqrt{\rho}P_0)^2]}\geq \sum_{i=0}^1 Tr{[(Q_i\sqrt{\rho}Q_i)^2]}
        \label{bigger}
    \end{equation} for arbitrary $\rho$.
    Because $\mathrm{Q}\succeq \mathrm{P}$, so for every $0\leq m\leq M-1$, there is one subset $\Lambda_m\subseteq \{0,\ldots,N-1\}$ such that $P_m=\sum_{n\in \Lambda_m}Q_n$. Simliar as Eq.\eqref{bigger}, we have
    \begin{equation}
        Tr{[(P_m\sqrt{\rho}P_m)^2]}\geq \sum_{n\in \Lambda_m} Tr{[(Q_n\sqrt{\rho}Q_n)^2]},
    \end{equation}which means that
    \begin{align*}
         \sum_{m=0}^{M-1}Tr{[(P_m\sqrt{\rho}P_m)^2]}&\geq \sum_{m=0}^{M-1}\sum_{n\in \Lambda_m} Tr{[(Q_n\sqrt{\rho}Q_n)^2]}\\
         &=\sum_{n=0}^{N-1}Tr{[(Q_n\sqrt{\rho}Q_n)^2]}.
    \end{align*}
   So, for two orthogonal decompositions $\mathrm{P}=\{P_m\}_{m=0}^{M-1}$ and $\mathrm{Q}=\{Q_n\}_{n=0}^{N-1}$ of one Hilbert space, if $\mathrm{Q}\succeq \mathrm{P}$, then we have $C(\rho,\mathrm{Q})\geq C(\rho,\mathrm{P})$.
\end{proof}

Until now, we can say that
\begin{theorem}
    Given an orthogonal decomposition $\mathrm{P}=\{P_i\}_i$ of the Hilbert space $\mathcal{H}$, the following quantity
    \begin{equation*}
        C(\rho,\mathrm{P})=1-\sum_{m=0}^{M-1}Tr{[(P_m\sqrt{\rho}P_m)^2]}
    \end{equation*}satisfies conditions $(C1')$-$(C4')$, which is one reasonable block coherence measure.
\end{theorem}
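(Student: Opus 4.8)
The plan is to obtain the theorem as an immediate consequence of the three lemmas already established, each of which has effectively isolated one (or two) of the four conditions; what remains is only to spell out the short implications.

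First I would invoke the variational identity $C(\rho)=\min_{\sigma\in\mathcal{I}}d_a^{\frac{1}{2}}(\rho,\sigma)$ from the first lemma. Since the $\frac{1}{2}$-affinity distance satisfies $d_a^{\frac{1}{2}}(\rho,\sigma)\ge 0$ with equality precisely when $\rho=\sigma$, and the minimum is attained at the free state $\sigma_\rho$ of Eq.~\eqref{min_condition}, it follows at once that $C(\rho)\ge 0$ and that $C(\rho)=0$ if and only if $\rho$ already lies in $\mathcal{I}$; this gives $(C1')$. For $(C2')$ I would use the same identity together with two standard facts: the contractivity of $d_a^{\frac{1}{2}}$ under quantum operations, and the observation that a free operation $\Phi$ sends $\mathcal{I}$ into $\mathcal{I}$ — each Kraus term maps a free state to a free state up to normalization and $\mathcal{I}$ is convex, so $\Phi(\sigma)\in\mathcal{I}$ whenever $\sigma\in\mathcal{I}$. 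Taking $\sigma_\rho$ to be the optimal free state for $\rho$, one then gets $C(\Phi(\rho))\le d_a^{\frac{1}{2}}(\Phi(\rho),\Phi(\sigma_\rho))\le d_a^{\frac{1}{2}}(\rho,\sigma_\rho)=C(\rho)$. I would also record, as the first lemma yields directly, the invariance $C\big(\bigoplus_m U_m\,\rho\,\bigoplus_m U_m^{\dagger}\big)=C(\rho)$ under block unitaries.

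Next I would cite the second lemma for $(C3')$: if $\rho=p\rho_1\oplus(1-p)\rho_2$ is block diagonal with respect to $\bigoplus_m P_m$, then $C(\rho)=pC(\rho_1)+(1-p)C(\rho_2)$. Finally, the third lemma supplies $(C4')$: whenever $\mathrm{Q}\succeq\mathrm{P}$ one has $C(\rho,\mathrm{Q})\ge C(\rho,\mathrm{P})$, the key estimate being $Tr[(P_m\sqrt{\rho}P_m)^2]\ge\sum_{n\in\Lambda_m}Tr[(Q_n\sqrt{\rho}Q_n)^2]$, obtained by discarding the manifestly non-negative off-diagonal terms $Tr[(Q_i\sqrt{\rho}Q_j)^{\dagger}(Q_i\sqrt{\rho}Q_j)]$ and then summing over $m$. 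Assembling these four facts establishes the theorem.

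Because all of the genuine computation has already been carried out in the lemmas, there is no real obstacle at this stage; the one step that deserves care is the monotonicity argument, where one must verify that the notion of free operation adopted here is strong enough to guarantee $\Phi(\mathcal{I})\subseteq\mathcal{I}$ — this is exactly where the convexity of the free-state set $\mathcal{I}$ for the fixed decomposition $\mathrm{P}$ is used — so that the contractivity of $d_a^{\frac{1}{2}}$ can be applied to conclude $(C2')$.
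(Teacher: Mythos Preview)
Your proposal is correct and matches the paper's approach exactly: the theorem is stated as a direct consequence of the three lemmas, with $(C1')$ and $(C2')$ read off from the variational identity of the first lemma together with contractivity of $d_a^{1/2}$, and $(C3')$, $(C4')$ supplied verbatim by the second and third lemmas. If anything, your treatment of $(C2')$ is slightly more careful than the paper's, since you make explicit the step $\Phi(\mathcal{I})\subseteq\mathcal{I}$ via convexity of $\mathcal{I}$, which the paper leaves implicit.
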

 Besides, given one orthogonal decomposition $\mathrm{P}=\{P_m\}_{m=0}^{M-1}$, we also show maximally coherent states defined by $C(\cdot)$ in the following lemma. 

\begin{lemma}
    Given one Hilbert space $\mathcal{H}$ and its one orthogonal decomposition $\mathrm{P}=\{P_m\}_{m=0}^{M-1}$, the sufficient and nesscessary condtion for $\ket{\psi}$ to be maximally coherent is $\ket{\psi}=\frac{1}{\sqrt{M}}\sum_m \ket{\psi_m}$, with $\ket{\psi_m}\in P_m(\mathcal{H}), \forall 0\leq m\leq M-1$. 
\end{lemma}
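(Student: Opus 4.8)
The plan is to exploit that for a pure state the operator square root is trivial. Writing $\rho=\ket{\psi}\bra{\psi}$ one has $\sqrt{\rho}=\rho$, so $P_m\sqrt{\rho}P_m=(P_m\ket{\psi})(\bra{\psi}P_m)$ is a rank-one positive operator and $Tr{[(P_m\sqrt{\rho}P_m)^2]}=(\bra{\psi}P_m\ket{\psi})^2$. Introducing the nonnegative numbers $q_m:=\bra{\psi}P_m\ket{\psi}=\|P_m\ket{\psi}\|^2$, the vector $\vec{q}$ is a probability distribution, because $\sum_m P_m$ is the identity on $\mathcal{H}$ and $\langle\psi|\psi\rangle=1$; hence $C(\ket{\psi}\bra{\psi},\mathrm{P})=1-\sum_m q_m^2$.

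The characterization then reduces to a one-line optimization on the probability simplex. From the identity $\sum_m q_m^2-\tfrac{1}{M}=\sum_m(q_m-\tfrac1M)^2\ge 0$ (equivalently, Cauchy--Schwarz against the all-ones vector under the constraint $\sum_m q_m=1$), we get $C(\ket{\psi}\bra{\psi},\mathrm{P})\le 1-\tfrac1M$, with equality if and only if $q_m=\tfrac1M$ for every $m$. This pins down both the maximal value $1-1/M$ and the equality condition, expressed purely in terms of the weights $\|P_m\ket{\psi}\|^2$.

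Next I would translate the condition ``$\|P_m\ket{\psi}\|^2=1/M$ for all $m$'' into the stated normal form, in both directions. If equality holds, set $\ket{\psi_m}:=\sqrt{M}\,P_m\ket{\psi}$; then each $\ket{\psi_m}$ is a unit vector in $P_m(\mathcal{H})$, and since $\sum_m P_m$ is the identity, $\ket{\psi}=\sum_m P_m\ket{\psi}=\tfrac1{\sqrt M}\sum_m\ket{\psi_m}$, which is exactly the claimed form. Conversely, if $\ket{\psi}=\tfrac1{\sqrt M}\sum_m\ket{\psi_m}$ with unit vectors $\ket{\psi_m}\in P_m(\mathcal{H})$, then mutual orthogonality of the ranges $P_m(\mathcal{H})$ gives $\langle\psi|\psi\rangle=1$ and $P_m\ket{\psi}=\tfrac1{\sqrt M}\ket{\psi_m}$, so $q_m=1/M$ and the optimal value $1-1/M$ is attained; thus $\ket{\psi}$ is maximally coherent.

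Finally, to justify that such states are maximally coherent among \emph{all} states and not merely among pure states, I would note that $C(\cdot)$ is convex --- this follows from $(C1')$--$(C4')$ in the spirit of \cite{Yu_cohere}, and is also reflected by the $d_a^{1/2}$-distance representation \eqref{equiv_min} --- so the maximum of $C$ over the convex set of density operators is attained at an extreme point, i.e.\ a pure state, and therefore equals $1-1/M$. I do not expect a genuine obstacle here: the proof is an elementary constrained optimization, and the only points that require care are keeping the estimate on the simplex an exact ``if and only if'' and the clean bookkeeping of the decomposition $\ket{\psi}=\sum_m P_m\ket{\psi}$ relative to the orthogonal subspaces.
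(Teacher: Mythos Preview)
Your argument is correct and follows essentially the same route as the paper: reduce to $C(\ket{\psi}\bra{\psi},\mathrm{P})=1-\sum_m\|P_m\ket{\psi}\|^4$, minimize $\sum_m q_m^2$ on the simplex $\sum_m q_m=1$, and read off the equality condition $q_m=1/M$. Your write-up is in fact more explicit than the paper's (the sum-of-squares identity, the two-way translation into the normal form, and the extra convexity remark extending the maximum to mixed states), but the underlying idea is identical.
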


\begin{proof}
    Given one pure state $\ket{\psi}$, from Eq.\eqref{coherence_sub}, its coherence about $\mathrm{P}=\{P_m\}_{m=0}^{M-1}$ is
    \begin{equation}
        C_{\frac{1}{2}}(\ket{\psi})=1-\sum_{m=0}^{M-1}||P_m\ket{\psi}||^4.
    \end{equation}So the statement that $\ket{\psi}$ is maximally coherent is equivalent to $\sum_{m=0}^{M-1}||P_m\ket{\psi}||^4$ being minimal. Note that for $\{||P_m\ket{\psi}||\}_m$, we have $\sum_{m=0}^{M-1}||P_m\ket{\psi}||^2=1.$ Based on these, we know that $\sum_{m=0}^{M-1}||P_m\ket{\psi}||^4$ being minimal if and only if  $||P_m\ket{\psi}||=\frac{1}{\sqrt{M}}, \forall 0\leq m\leq M-1$.
\end{proof}

\noindent {\it \textbf{Order-Preserving in POVM Cases.}---}
Order-preserving condition is the mathematical formulation of measurements' coarse-graining. On the other hand, we know that coarse-graining is quite general for measurements, which is not restricted to projective measurements. Therefore, it's worth discussing the order-preserving condition in the realm of the most general measurements, POVMs. 

The resource theory of coherence based on POVMs is firstly proposed in \cite{Bischof_povm}. In \cite{Bischof_povm}, they define POVM-based coherence by the famous Naimark extension, where corresponding free states and free operations are also defined. And in \cite{Bischof_povm}, given one POVM $\mathrm{E}=\{E_i\}_i$, they give one analytical form of a POVM-based coherence measure, which is expressed as
\begin{equation}
    C_{rel}(\rho)=H[\{p_i(\rho)\}]+\sum_i p_i(\rho)S(\rho_i)-S(\rho)
    \label{rel_POVM}.
\end{equation}In \eqref{rel_POVM}, $p_i(\rho)=Tr(E_i \rho)$ and $\rho_i=(1/p_i)A_i\rho_i A^{\dagger}$, where $\mathrm{A}=\{A_i\}_i$ is one measurement operators corresponding to $\mathrm{E}=\{E_i\}_i$. They verify that this measure does not depend on the choice of measurement operators and satisfies the criterion to be a good POVM coherence measure. Although in \cite{Bischof_povm}, they mainly consider faithfulness, monotonicity and convexity conditions similiar to $(C1)$, $(C2)$ and $(C4)$ mentioned in the introduction part, in the following, we will show that the POVM coherence measure \eqref{rel_POVM} also satisfies our order-preserving condition.

Let us firstly recall what coarse-graining means in the POVM setting. Given one POVM $\mathrm{E}=\{E_i\}_{i=0}^{M-1}$ with its measurement operators $\mathrm{A}=\{A_i\}_{i=0}^{M-1}$, we know that the indice $i$ represents the measurement outcome. Now suppose that there is another POVM $\mathrm{F}=\{F_j\}_{j=0}^{N-1}$. If $\mathrm{F}$ is one coarse-graining of $\mathrm{E}$, similar as the projective measurement case, this means that every outcome $j$ of $\mathrm{F}$ is one union of some outcomes of $\mathrm{E}$. That is, for every $j$ in $\{0,\ldots,N-1\}$, there is one subset $\Lambda_j$ of $\{0,\ldots,M-1\}$ such that 
\begin{equation}
    q_j(\rho)=Tr(F_j\rho)=\sum_{i\in \Lambda_j}Tr(E_i\rho)=\sum_{i\in \Lambda_j}p_i(\rho),
    \label{povm_coarse_prob}
\end{equation} and $\mathrm{F}$'s $j$ outcome state can be expressed as 
\begin{equation}
    \rho_j=\sum_{i\in \Lambda_j}\frac{p_i(\rho)}{\sum_{i\in \Lambda_j}p_i(\rho)}\cdot \frac{A_i\rho A_i^{\dagger}}{p_i(\rho)}=\sum_{i\in \Lambda_j}\frac{p_i(\rho)}{\sum_{i\in \Lambda_j}p_i(\rho)}\rho_i.
    \label{povm_coarse_s}
\end{equation}Besides, $\{\Lambda_j\}_{j=0}^{N-1}$ forms one disjoint union of $\{0,\ldots,M-1\}$.

Now, we will show that the POVM coherence measure \eqref{rel_POVM} satisfies the order-preserving condition.
\begin{theorem}[Order-Preserving For POVMs]
Given two POVM $\mathrm{E}=\{E_i\}_{i=0}^{M-1}$ and $\mathrm{F}=\{F_j\}_{j=0}^{N-1}$. If $\mathrm{E}\succeq\mathrm{F}$, that is, $\mathrm{F}$ is one coarse-graining of $\mathrm{E}$, then, for every state $\rho$, 
\begin{equation*}
    C_{rel}(\rho,\mathrm{E})\geq C_{rel}(\rho,\mathrm{F}).
\end{equation*}
\end{theorem}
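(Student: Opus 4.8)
The plan is to reduce the claimed inequality to two standard facts: the grouping (recursivity) identity for the Shannon entropy and the mixing bound for the von Neumann entropy. First I would observe that, because $C_{rel}$ does not depend on the choice of measurement operators \cite{Bischof_povm}, I may evaluate $C_{rel}(\rho,\mathrm{F})$ using the measurement operators obtained by regrouping those of $\mathrm{E}$ according to the partition $\{\Lambda_j\}$; this is exactly the choice producing the probabilities $q_j(\rho)=\sum_{i\in\Lambda_j}p_i(\rho)$ of Eq.\eqref{povm_coarse_prob} and the outcome states $\rho_j=\sum_{i\in\Lambda_j}\frac{p_i(\rho)}{q_j(\rho)}\rho_i$ of Eq.\eqref{povm_coarse_s}. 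Writing out both measures from \eqref{rel_POVM} and subtracting, the $-S(\rho)$ terms cancel and I am left with
\begin{equation*}
  C_{rel}(\rho,\mathrm{E})-C_{rel}(\rho,\mathrm{F})=\Big(H[\{p_i\}]-H[\{q_j\}]\Big)+\sum_i p_i S(\rho_i)-\sum_j q_j S(\rho_j).
\end{equation*}

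Next I would handle the classical and quantum pieces separately. For the classical part, the grouping identity for the Shannon entropy with the disjoint partition $\{\Lambda_j\}$ of $\{0,\ldots,M-1\}$ gives
\begin{equation*}
  H[\{p_i\}]-H[\{q_j\}]=\sum_j q_j\, H\!\big[\{p_i/q_j\}_{i\in\Lambda_j}\big].
\end{equation*}
For the quantum part, I would apply for each fixed $j$ the standard bound $S\!\big(\sum_k\lambda_k\sigma_k\big)\le\sum_k\lambda_k S(\sigma_k)+H[\{\lambda_k\}]$ to the convex combination $\rho_j=\sum_{i\in\Lambda_j}\frac{p_i}{q_j}\rho_i$, then multiply by $q_j$ and sum over $j$, obtaining
\begin{equation*}
  \sum_j q_j S(\rho_j)\le\sum_i p_i S(\rho_i)+\sum_j q_j\, H\!\big[\{p_i/q_j\}_{i\in\Lambda_j}\big].
\end{equation*}

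Finally I would combine the three displays: the grouping term $\sum_j q_j H[\{p_i/q_j\}_{i\in\Lambda_j}]$ that appears with a plus sign from the entropy identity is precisely the quantity bounding $\sum_j q_j S(\rho_j)$ from above, so the two cancel and $C_{rel}(\rho,\mathrm{E})-C_{rel}(\rho,\mathrm{F})\ge 0$, which is the assertion.

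I do not expect a serious obstacle. The points that need care are: (a) justifying that the regrouped measurement operators form a legitimate realisation of $\mathrm{F}$ with outcome states of the form \eqref{povm_coarse_s}, for which the operator-independence of $C_{rel}$ is the key input; and (b) being careful with the entropy inequalities, which is harmless since $\rho$ acts on a finite-dimensional space so all terms are finite. The only mildly subtle conceptual step is recognising that coarse-graining must be phrased at the level of measurement operators, not of the effects $E_i$ alone, since the post-measurement states — and hence the $\sum_i p_i S(\rho_i)$ term — depend on the former.
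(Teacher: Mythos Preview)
Your proposal is correct and follows essentially the same route as the paper: both arguments reduce the inequality to the grouping identity $H[\{p_i\}]-H[\{q_j\}]=\sum_j q_j H[\{p_i/q_j\}_{i\in\Lambda_j}]$ for the Shannon entropy together with the mixing bound $S(\rho_j)\le\sum_{i\in\Lambda_j}\tfrac{p_i}{q_j}S(\rho_i)+H[\{p_i/q_j\}_{i\in\Lambda_j}]$ for the von Neumann entropy, and then observe that the two conditional-entropy terms cancel. The only differences are organisational---you subtract first and show the difference is nonnegative, whereas the paper phrases it as an inequality and verifies the Shannon identity by direct computation---and your extra remark on the choice of measurement operators, which is a fair point but not elaborated in the paper's proof.
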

\begin{proof}
Suppose that $\mathrm{A}=\{A_i\}_i$ is one measurement operators corresponding to $\mathrm{E}=\{E_i\}_i$, then we know that
$p_i\doteq p_i(\rho)=Tr(E_i\rho)$ and its $i$ outcome state is $\rho_i=(1/p_i)A_i\rho_i A^{\dagger}$.

Because $\mathrm{E}\succeq\mathrm{F}$, so utilizing the definition of $C_{rel}(\cdot)$, Eq.\eqref{povm_coarse_prob} and Eq.\eqref{povm_coarse_s}, to prove 
\begin{equation*}
    C_{rel}(\rho,\mathrm{E})\geq C_{rel}(\rho,\mathrm{F})
\end{equation*} is equivalent to prove
\begin{equation}
    H[\{p_i\}_{i=0}^{M-1}]+\sum_{i=0}^{M-1}p_iS(\rho_i)\geq  H[\{q_j\}_{j=0}^{N-1}]+\sum_{j=0}^{N-1}q_jS(\rho_j).
    \label{equivalent}
\end{equation}
Note that for $\sum_{j=0}^{N-1}q_jS(\rho_j)$, because of $\rho_j=\sum_{i\in \Lambda_j}\frac{p_i(\rho)}{\sum_{i\in \Lambda_j}p_i(\rho)}\rho_i$, so for every $j\in \{0,\ldots,N-1\}$, we have
\begin{equation*}
    S(\rho_j)=S(\sum_{i\in \Lambda_j}\frac{p_i(\rho)}{\sum_{i\in \Lambda_j}p_i(\rho)}\rho_i)\leq \sum_{i\in\Lambda_j}\frac{p_i}{q_j}S(\rho_i)+H[\{\frac{p_i}{q_j}\}_{i\in\Lambda_j}],
\end{equation*}which results in
\begin{equation}
    \sum_{j=0}^{N-1}q_jS(\rho_j)\leq \sum_{i=0}^{M-1}p_i S(\rho_i)+\sum_{j=0}^{N-1}q_j H[\{\frac{p_i}{q_j}\}_{i\in\Lambda_j}].
    \label{mid_step}
\end{equation}

Now taking Eq.\eqref{mid_step} into the rightside of Eq.\eqref{equivalent}, to establish Eq.\eqref{equivalent}, it is equivalent to prove the following fact
\begin{equation}
    H[\{p_i\}_{i=0}^{M-1}]\geq \sum_{j=0}^{N-1}(\sum_{i\in\Lambda_j}p_i)H[\{\frac{p_i}{q_j}\}_{i\in\Lambda_j}].
    \label{equivalent_2}
\end{equation}By direct computations, we know that
\begin{align*}
    & \sum_{j=0}^{N-1}(\sum_{i\in\Lambda_j}p_i)H[\{\frac{p_i}{q_j}\}_{i\in\Lambda_j}]\\
    &=\sum_{j=0}^{N-1}(\sum_{i\in\Lambda_j}p_i)(-\sum_{i\in\Lambda_j}\frac{p_i}{\sum_{i\in\Lambda_j}p_i}\log \frac{p_i}{\sum_{i\in\Lambda_j}p_i})\\
    &=-\sum_{i=0}^{M-1}p_i \log p_i +\sum_{j=0}^{N-1}(\sum_{i\in \Lambda_j}p_i)\log (\sum_{i\in \Lambda_j}p_i)\\
    &= H[\{p_i\}_{i=0}^{M-1}]- H[\{q_j\}_{j=0}^{N-1}]\leq H[\{p_i\}_{i=0}^{M-1}].
\end{align*}Thus, Eq.\eqref{equivalent_2} is correct, showing the correctness of Eq.\eqref{equivalent}. Until now, $C_{rel}(\rho,\mathrm{E})\geq C_{rel}(\rho,\mathrm{F})$ is verified.
\end{proof}

\noindent {\it \textbf{Discussion and Conclusion.}---}
In this work, we propose the order-preserving condition as one new criterion for block coherence. This condition is the mathematical formulation of the coarse-graining of measurements. Given one state, for block coherence, this condition bridges the finest measurement, von-Neumann measurements and the crudest measurement $\mathrm{E}=\{I\}$. Once we observe more carefully, the corresponding amount of coherence will not decrease. Combined the order-preserving condition with previous block coherence criteria, we show that there are coherence measures satisfying all these requirements. And by extending the $1/2$-affinity of coherence into block coherence cases, we obtain one concrete block coherence measure which satisfies all requirements. Furthermore, in terms of the generality of POVMs in quantum mechanics, we also study the order-preserving condition in the POVM coherence case. Based on \cite{Bischof_povm,Bischof_pra}, we verify that their POVM coherence measure $C_{rel}(\cdot)$ also satisfies our order-preserving condition.

Further attempts are still needed. For example, in the POVM coherence part, we checked wether one existed coherence measure satisfies the order-preserving condition. Fortunately, it satisfies the order-preserving condition. However, previous considerations about coherence measures are mainly focused on one measurement cases, rarely about comparisons between different measurements. Faithfulness, monotonicity and convexity are their main concerns. About the fact $C_{rel}(\cdot)$ satisfying our order-preserving condition, is it just a coincidence ? Or can we derive the order-preserving condition from those existed conditions? Given one POVM $\mathrm{E}=\{E_i\}_i$, considering our block coherence measure given in Eq.\eqref{coherence_sub}, there is one natural extension into POVM cases,
\begin{equation}
    C(\rho)=1-\sum_i Tr[(A_i\sqrt{\rho}A_i^{\dagger})^2],
    \label{coherence_povm}
\end{equation}where $\mathrm{A}=\{A_i\}_i$ is one measurement operator set of $\mathrm{E}$. Can this quantity be one proper POVM coherence measure? By direct computations, the quantity in Eq.\eqref{coherence_povm} does not depend on the choice of measurement operators. Secondly, in \cite{Bischof_povm}, they give one nice description of free states in the framework of POVM coherence. If we assume that corresponding sets of free states are not empty, then based on their result, it's easy to verify that the quantity given in Eq.\eqref{coherence_povm} satisfies condtions $(C1')$-$(C3')$. And lastly, this quantity also satisfies our order-preserving condition
\begin{lemma}
    Given two POVMs $\mathrm{E}=\{E_i\}_{i=0}^{M-1}$ and $\mathrm{F}=\{F_j\}_{j=0}^{N-1}$, if $\mathrm{F}$ is one coarse-graining of $\mathrm{E}$, then for arbitrary state $\rho$, we have
    \begin{equation*}
        C(\rho,\mathrm{E})\geq C(\rho,\mathrm{F}).
    \end{equation*}
\end{lemma}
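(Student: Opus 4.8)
The plan is to imitate the argument behind the projective-measurement order-preserving lemma, the crucial simplification being that the quantity in Eq.\eqref{coherence_povm} in fact depends only on the POVM elements $\{E_i\}_i$ and not on the chosen measurement operators $\{A_i\}_i$. Concretely, for any $A_i$ with $E_i=A_i^{\dagger}A_i$, cyclicity of the trace gives $Tr[(A_i\sqrt{\rho}A_i^{\dagger})^2]=Tr[A_i\sqrt{\rho}A_i^{\dagger}A_i\sqrt{\rho}A_i^{\dagger}]=Tr[\sqrt{\rho}E_i\sqrt{\rho}E_i]=Tr[(\sqrt{\rho}E_i)^2]$, so that $C(\rho,\mathrm{E})=1-\sum_{i=0}^{M-1}Tr[(\sqrt{\rho}E_i)^2]$ and likewise $C(\rho,\mathrm{F})=1-\sum_{j=0}^{N-1}Tr[(\sqrt{\rho}F_j)^2]$. (This incidentally establishes the measurement-operator independence asserted just above the lemma.) Hence the lemma is equivalent to the inequality $\sum_{j=0}^{N-1}Tr[(\sqrt{\rho}F_j)^2]\geq\sum_{i=0}^{M-1}Tr[(\sqrt{\rho}E_i)^2]$.

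First I would invoke the coarse-graining hypothesis in the form already used in Eq.\eqref{povm_coarse_prob}: $F_j=\sum_{i\in\Lambda_j}E_i$, where $\{\Lambda_j\}_{j=0}^{N-1}$ is a disjoint partition of $\{0,\dots,M-1\}$. Expanding the square, $Tr[(\sqrt{\rho}F_j)^2]=\sum_{i,k\in\Lambda_j}Tr[\sqrt{\rho}E_i\sqrt{\rho}E_k]=\sum_{i\in\Lambda_j}Tr[(\sqrt{\rho}E_i)^2]+\sum_{i\neq k\in\Lambda_j}Tr[\sqrt{\rho}E_i\sqrt{\rho}E_k]$. Summing over $j$ and using the partition property to reassemble $\sum_j\sum_{i\in\Lambda_j}Tr[(\sqrt{\rho}E_i)^2]=\sum_i Tr[(\sqrt{\rho}E_i)^2]$, the claim reduces to showing that the sum of all off-diagonal terms $\sum_{j}\sum_{i\neq k\in\Lambda_j}Tr[\sqrt{\rho}E_i\sqrt{\rho}E_k]$ is non-negative.

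For that last step I would pass to the positive square roots, $E_i=\sqrt{E_i}\,\sqrt{E_i}$, and use cyclicity together with the Hermiticity of $\sqrt{E_i}$, $\sqrt{\rho}$, $\sqrt{E_k}$ to rewrite $Tr[\sqrt{\rho}E_i\sqrt{\rho}E_k]=Tr[(\sqrt{E_i}\sqrt{\rho}\sqrt{E_k})^{\dagger}(\sqrt{E_i}\sqrt{\rho}\sqrt{E_k})]\geq0$, i.e.\ each cross term is the Hilbert--Schmidt norm squared of $\sqrt{E_i}\sqrt{\rho}\sqrt{E_k}$. This is the same positivity mechanism that produces Eq.\eqref{bigger} in the projective case, where the cross terms read $Tr[(Q_i\sqrt{\rho}Q_j)^{\dagger}(Q_i\sqrt{\rho}Q_j)]$; here the orthogonal projectors $Q$ are simply replaced by the (generically non-commuting) operators $\sqrt{E}$. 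Putting the pieces together gives $\sum_j Tr[(\sqrt{\rho}F_j)^2]\ge\sum_i Tr[(\sqrt{\rho}E_i)^2]$, hence $C(\rho,\mathrm{E})\ge C(\rho,\mathrm{F})$.

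I do not anticipate a serious obstacle: the argument is a direct POVM analogue of the projective lemma. The two points that need a little care are the reduction in the first paragraph — one must be sure the measure is genuinely a function of $\{E_i\}$ alone, so that $C(\rho,\mathrm{F})$ is unambiguous no matter which measurement operators realize the coarse-grained POVM — and the bookkeeping with the square roots in the last step, since unlike orthogonal projectors the $\sqrt{E_i}$ are neither idempotent nor mutually annihilating, so the non-negativity has to be extracted term by term from the Hilbert--Schmidt structure rather than from any algebraic cancellation.
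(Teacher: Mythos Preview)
Your proposal is correct and follows essentially the same route as the paper: both reduce to showing $\sum_j Tr[F_j\sqrt{\rho}F_j\sqrt{\rho}]\geq\sum_i Tr[E_i\sqrt{\rho}E_i\sqrt{\rho}]$, expand $F_j=\sum_{i\in\Lambda_j}E_i$ bilinearly, and dispose of the cross terms via $Tr[\sqrt{\rho}E_i\sqrt{\rho}E_k]=Tr[(\sqrt{E_i}\sqrt{\rho}\sqrt{E_k})^{\dagger}(\sqrt{E_i}\sqrt{\rho}\sqrt{E_k})]\geq 0$. Your explicit derivation that $Tr[(A_i\sqrt{\rho}A_i^{\dagger})^2]=Tr[(\sqrt{\rho}E_i)^2]$ is a nice addition that the paper only asserts ``by direct computations,'' but the heart of the argument is identical.
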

\begin{proof}
    Because $\mathrm{F}$ is one coarse-graining of $\mathrm{E}$, so for every $j\in\{0,\ldots,N-1\}$, there will be a subset $\Lambda_j\subseteq\{0,\ldots,M-1\}$ such that
    \begin{equation}
        F_j=\sum_{i\in\Lambda_j}E_i
        \label{coarse}
    \end{equation} and $\cup_{j}\Lambda_j=\{0,\ldots,M-1\}$.

    To prove this lemma, it is equivalent to show that 
    \begin{equation}
        \sum_i Tr[(E^{1/2}_i\sqrt{\rho}E^{1/2}_i)^2]\leq  \sum_j Tr[(F^{1/2}_j\sqrt{\rho}F^{1/2}_j)^2].
    \end{equation}Taking Eq.\eqref{coarse} into consideration, about $Tr[(F^{1/2}_j\sqrt{\rho}F^{1/2}_j)^2],\forall j$, we have
    \begin{align*}
       & Tr[(F^{1/2}_j\sqrt{\rho}F^{1/2}_j)^2]=Tr\{[(\sum_{m\in\Lambda_j}E_m)^{1/2}\sqrt{\rho}(\sum_{n\in\Lambda_j}E_n)^{1/2}]^2\}\\
       &=Tr[(\sum_{m\in\Lambda_j}E_m)\sqrt{\rho}(\sum_{n\in\Lambda_j}E_n)\sqrt{\rho}]\\
       &=\sum_{m\in\Lambda_j}Tr(E_m\sqrt{\rho}E_m\sqrt{\rho})+\sum_{m\neq n, m,n\in\Lambda_j}Tr[(E^{1/2}_m\sqrt{\rho}E^{1/2}_n)(E^{1/2}_m\sqrt{\rho}E^{1/2}_n)^{\dagger}]\\
       &\geq \sum_{m\in\Lambda_j}Tr(E_m\sqrt{\rho}E_m\sqrt{\rho})=\sum_{m\in\Lambda_j}Tr[(E^{1/2}_m\sqrt{\rho}E^{1/2}_m)^2].
    \end{align*}Togather with the fact that $\cup_{j}\Lambda_j=\{0,\ldots,M-1\}$, finally we get the following fact
    \begin{equation}
        \sum_i Tr[(E^{1/2}_i\sqrt{\rho}E^{1/2}_i)^2]\leq  \sum_j Tr[(F^{1/2}_j\sqrt{\rho}F^{1/2}_j)^2],
    \end{equation} which is just what we want.
\end{proof}
Until now, it seems fine that Eq.\eqref{coherence_povm} is one proper POVM coherence measure. But as what is revealed in \cite{Bischof_povm}, for many POVMs, their corresponding coherence-free state set can be empty, for which previous considerations won't work. So to completely verify the quantity is one proper POVM coherence measure, further attempts are needed.

\noindent {\it \textbf{Acknowledgement.}}
This work is supported by the National
Natural Science Foundation of China, Grant No. 1240010163 and the Fundamental Research Funds for the Central Universities, No.30925010422.

\bibliography{references}

\newpage

\end{document}